\tikzstyle{bigworld} = [circle, draw, fill=white!20, text width=4.5em, text badly centered, node
\tikzstyle{smallworld} = [circle, draw, fill=white!20, text width=2em, text badly centered, node
\tikzstyle{line}=[draw,-latex']
\newcommand{\svs}{\: | \:}
\newtheorem{theorem}{Theorem}[section]
\newtheorem{lemma}[theorem]{Lemma}
\newtheorem{corollary}[theorem]{Corollary}
\newtheorem{definition}[theorem]{Definition}
\newtheorem{example}[theorem]{Example}
\title{Automatic Verification of Parameterised Interleaved Multi-Agent
  Systems}
\author{
\alignauthor
Panagiotis Kouvaros 
\affaddr{\\Department of Computing\\Imperial College London}
    \email{panagiotis.kouvaros10@imperial.ac.uk}
\alignauthor
Alessio Lomuscio
    \affaddr{\\Department of Computing\\Imperial College London}
    \email{a.lomuscio@imperial.ac.uk}
}
\begin{document}

\maketitle

\begin{abstract}
A key problem in verification of multi-agent systems by
model checking concerns the fact that the state-space of the
system grows exponentially with the number of agents
present. This often makes practical model checking
unfeasible whenever the system contains more than a few
agents. In this paper we put forward a technique to
establish a cutoff result, thereby showing that systems with
an arbitrary number of agents can be verified by checking a
single system consisting of a number of agents equal to the
cutoff of the system. While this problem is undecidable in
general, we here define a class of parameterised interpreted
systems and a parameterised temporal-epistemic logic for
which the result can be shown. We exemplify the theoretical
results on a robotic example and present an implementation
of the technique as an extension of \texttt{\textsc{mcmas}},
an open-source model checker for multi-agent systems.
\end{abstract}

\category{D.2.4}{Software/Program Verification}{Model Checking}
\terms{Theory; Verification}
\keywords{Epistemic Logic; Model Checking; Parameterised MAS}

\section{introduction}
Verification and validation of systems before deployment is
increasingly seen of fundamental importance not just in
safety-critical applications, but also in more mainstream
applications. Multi-Agent Systems (MAS) are no exception. The past ten
years have witnessed considerable research in verification techniques
aimed at assessing automatically whether or not a MAS meets its
intended specifications.

One of the leading techniques in this area is \emph{model
checking}~\cite{clarke+99a}. In this setting the system $S$
under analysis is encoded as a transition system $M_S$ and a
specification $P$ is formalised as a logical formula
$\phi_P$; a model checking procedure is used to determine
whether $M_S \models \phi_P$, i.e., whether or not the
system $M_S$ satisfies the formula $\phi_P$. Since MAS
specifications are often expressed as epistemic, deontic,
and ATL formulas; the techniques put forward in the MAS
community reflect
this~\cite{kacprzak2008verics,gammie2004mck,lomuscio2009mcmas}.
While explicit techniques are less efficient, symbolic
checkers such as \texttt{\textsc{MCK}}~\cite{gammie2004mck},
\texttt{\textsc{mcmas}}~\cite{lomuscio2009mcmas} and
\texttt{\textsc{VerICS}}~\cite{kacprzak2008verics} are
capable of handling state-spaces of the region of $10^{15}$
and beyond. However, MAS-based applications, due to the
agents' complex and intentional nature, often generate much
larger state spaces. To alleviate this problem, a number of
techniques, including
abstraction~\cite{cohen2009abstraction}, have been put
forward. These have been successful in allowing users to
tackle larger systems, but it is still the case that,
generally speaking, systems with many agents are difficult
to verify. The aim
of this paper is to contribute to overcome this shortcoming.

In the present setting we consider a class of MAS composed
of identical agents interacting with the environment.  While
this may seem a strong condition, this is a relatively
common assumption in many application areas of interest
ranging from robotics, to artificial life, swarm
intelligence, services and in open systems in general. A
natural question in these systems is whether certain
properties hold \emph{irrespective of the number of agents
present}. For example, in a remote robotic scenario we may
wish to check that a goal is met irrespective of how many
robots are present. It is immediate to see that plain
 model checking cannot be used to solve this problem. To establish
this property we would have to consider an infinite number
of different systems each composed of a different number of
agents and run model checking algorithms on each of these.
Not withstanding the fact that we cannot check an infinite
number of systems, this class includes instances
for which model checking would
require an unfeasible amount of memory and time.

In this paper we develop a technique that enables us to
derive the number of agents that is sufficient to consider
to show that a property holds in the system \emph{for any
number of agents}. In line with the literature on reactive
systems~\cite{emerson+00,hanna+10}, we call this bound the
\emph{MAS cutoff}. In contrast with literature in reactive
systems we here work with \emph{interleaved interpreted
systems}~\cite{lomuscio+10a} and  temporal-epistemic
specifications.

The rest of the paper is organised as follows. In Section~2
we define an interleaved semantics that will use throughout
the paper, a logic that we call IACTL$^*$K$_{-X}$ which
combines the universal fragment of CTL$^*$ without ``next''
with a parameterised version of epistemic logic, and
establish a stuttering-equivalence simulation result.
Section~3 introduces the technique to establish the cutoff
and presents our main theoretical result. To exemplify the
theory we discuss a robotic example in Section~4. We discuss
our implementation in Section~5 and present experimental
results. We conclude in Section~6 also discussing related
work.

\section{parameterised interleaved \\ interpreted systems}

In this section we introduce a framework for reasoning about
parameterised multi-agent systems. In particular, we recall the
semantics of interleaved interpreted systems~\cite{lomuscio+10a} and
we introduce parameterised interleaved multi-agent systems. To
reason about the temporal-epistemic properties of agents, we introduce
the logic IACTL\(^*\)K\(_{-X}\), a parameterised extension of
ACTL\(^*_{-X}\) with indexed atomic propositions and indexed epistemic
modalities.

\subsection{Interleaved Interpreted Systems}

The interpreted systems (IS) formalism~\cite{fagin+03a} is a standard
semantics for MAS. Here we consider a special class of interpreted
systems, called interleaved interpreted systems
(IIS)~\cite{lomuscio+10a}, in which the agents evolve in parallel
asynchronously (i.e., by means of interleaving
semantics~\cite{clarke+99a}). Differently from standard interpreted
systems where actions may be performed by all the agents at the same
round, IIS insist on only one local action at the time to be performed
in the system. If at any given round more than one agent admits in
its repertoire the action to be performed, then all agents sharing
this action  perform this action at that round. Thus, the agents
communicate by means of shared actions. The temporal evolution of an
agent's local states is accommodated to the needs of interleaving;
while in standard IS the next local state depends on the actions
performed by all agents in the system, in IIS local states depend only
on the agent's own action.  Below, we summarise the framework of IIS, as
presented in~\cite{lomuscio+10a}, to model interleaved MAS.

We assume that a MAS is composed of \( n \) agents \( \mathcal{A} =
\{1,\ldots\,n\} \). Each agent \( i \in \mathcal{A} \) is
characterised by a finite set of local states \( L_{i} \) and a finite
set of actions \( Act_{i} \). Each \( Act_{i} \) contains a special
action \( \epsilon_{i} \) which we call the ``silent'' action; as the
name suggests, whenever \(\epsilon_{i}\) is performed, agent \(i\)'s
local state does not change.  We call \(ACT =\bigcup_{i \in
    \mathcal{A}} Act_{i}\) the union of all actions. Actions are
    performed in compliance with a protocol \( P_{i} : L_{i}
    \rightarrow \wp(Act_{i}) \) governing which actions can be
    executed in a given state. The silent action is enabled at every
    local state; formally, \( \forall i \in \mathcal{A} : \forall
    l_{i} \in L_{i} : \epsilon_{i} \in P_{i}(l_{i}) \). For each
    action \(a\), we call \(Agent(a) = \{i \in \mathcal{A} \svs a \in
    Act_{i} \}\) the set of agents potentially able to perform \(a\).
    The evolution of agent \(i\)'s local states is described by
    the transition function \( t_{i} : L_{i} \times Act_{i}
    \rightarrow L_{i} \) such that \( t_{i}(l_{i},\epsilon_{i}) =
    l_{i} \) for each \( l_{i} \in L_{i} \). Note that \(t_{i}\) is a
    function of agent \(i\)'s local action only.

A global state \( g=(l_{1},\cdots,l_{n}) \in L_{1} \times \cdots
\times L_{n} \) is an \(n\)-tuple of local states for all the agents
in the MAS and represents the state of the system at a particular
instance of time. Given a global state \(g=(l_{1},\ldots,l_{n})\), we
write \(g_{i}\) to denote the local component \(l_{i}\) of agent \(i
\in \mathcal{A}\) in \(g\). Given a set of agents \(J = \{
j_{1},\dots,j_{|J|}\} \subseteq \mathcal{A}\), we write \(g_{J}\) to
denote the tuple of local components \( (l_{j_{1}},\ldots,l_{j_{|J|}})
\) of agents \( J \) in \(g\).
The local protocols and the local evolution functions determine how
the system proceeds from one global state to the next.

\begin{definition}{(Interleaved Semantics)}
	\label{def:isemantics}
    Let G be a set of global states. The global interleaved evolution
    function \(t: G \times Act_{i} \times \cdots \times Act_{n}
    \rightarrow G\) is defined as follows:
    \(t(g,a_{1},\ldots,a_{n})=g'\) iff there exists an action \(a
    \in ACT\) such that for all \(i \in Agent(a)\) we have that
    \(a_{i}=a\) and \(t_{i}(g_{i},a)=g'_{i}\); and for all \(i \in
    \mathcal{A} \setminus Agent(a)\), we have that \(a_{i} =
    \epsilon_{i}\) and \(t_{i}(g_{i},a_{i}) = g'_{i} = g_{i} \).  In
    short, we write the above as \(g \stackrel{a}{\rightarrow} g'\).
\end{definition}

We assume that the joint silent action is always enabled; thus, the
global transition relation is serial. A sequence of global states and
actions \(\pi = g^{1}a^{1}g^{2}a^{2}g^{3} \ldots\) is said to be an
interleaved path (or simply a path) originating at \(g^{1}\) if for
every pair of successor states we have that \(g^{i}
\stackrel{a^{i}}{\rightarrow}g^{i+1}\), for every \(i\geq 1\).  We
write \(\pi(i)\) to denote the \(i\)-th global state in \(\pi\). The
set of all  paths originating from \(g\) is denoted by
\(\Pi(g)\). The local path of agent \(i \in \mathcal{A} \)
in \(\pi\) is the projection of \(\pi\) onto \(i\); i.e., the sequence
\( \pi^{i} = g^{1}_{i}a^{1}g^{2}_{i}a^{2}g^{3}_{i} \ldots\).  The
projection of \(\pi\) onto a set of agents \(J\) is the sequence
\(\pi^{J} = g^{1}_{J}a^{1}g^{2}_{J}a^{2}g^{3}_{J} \ldots\). We denote
by \(\pi[i]\) the suffix \(g^ia^ig^{i+1}\cdots\) of \(\pi\). A state
\(g \in G\) is said to be reachable from \(g^{1} \in G\) if there is a
path \(g^{1}a^{1}g^{2}\cdots\) such that \(g=g^{i}\), for some \(i
\geq 1\).

\begin{definition}{(Interleaved Interpreted Systems)}
    \label{def:IIS}    
    Let \(\mathit{AP}\) be a set of atomic propositions.  An \emph{interleaved
      interpreted system} (IIS), or a model, is a \(4\)-tuple
    \(\mathcal{M}=\langle G,\iota,\Pi,V \rangle\), where \(G\) is a
    set of global states, \(\iota \in G\) is an initial global state
    such that each state in \(G\) is reachable from \(\iota\), \( \Pi
    = \displaystyle \bigcup_{g \in G} \Pi(g) \) is the set of all
    interleaved paths originating from all states in \(G\), and \(V:
    \mathit{AP} \rightarrow \wp(G) \) is a valuation function.
\end{definition}

\begin{example} \label{ex:me}

The IIS presented in Figure~\ref{fig:ex:me} is a modified version of
the autonomous robot (AR) example from~\cite{fagin+03a}. A robot runs
along an endless straight track; its position is given in terms of
locations numbered as \(0,1,2,\cdots\). The robot can only move
forward along the track starting at position 0. A faulty sensor is
attached to the robot measuring its position; a sensor reading at
location \(q\) can be any of the values in \(R_{q} = \{q-1,q,q+1\}\).
The movement of the robot is controlled by the environment. The only
action the robot can perform is to halt; if the robot does not halt,
the environment may move the robot one position forward at each time
step; once the robot halts, the environment can no longer move it.
The goal of the robot is never to exit the goal region \( \mathit{GR}
= \{2,3,4\}\) upon entering into it and never to halt in the
restricted region \(\mathit{RR}=\{0,1\}\) . A sound and complete
solution to the autonomous robot problem~\cite{fagin+03a} is for the
robot to do nothing while the value of its sensor is less than 3 and
to halt once the value of its sensor is greater than or equal to 3. In
the figure a state of the environment represents the position of the
robot, and a state $psh$ of the robot represents, respectively, its
position, its sensor reading, and whether it has halted or not.

\begin{figure}
	\centering
    \scalebox{0.95}{
    \subfloat[Environment]{
		\begin{tikzpicture}[node distance=2cm, auto]
			\node [smallworld] (zero) {$0$};
			\node [smallworld, right of=zero] (one) {$1$};
            \path [line,font=\scriptsize] (zero) edge node {\(m^{=}\)} (one);
            \path [line,font=\scriptsize] (zero) edge [bend right=60] node[below]
            {\(m^{-}\)} (one);
            \path [line,font=\scriptsize] (zero) edge [bend
            left=60] node {\(m^{+}\)} (one);
            \draw[dotted] (2,0) -- (2.5,0);


        \end{tikzpicture}
			\label{fig:process1}
		}
		
		\subfloat[Robot]{
			\begin{tikzpicture}[node distance=2cm, auto]
                \node [smallworld] (zz) {$00\bot$};
				\node [smallworld, right of=zz, above of=zz] (ot)
                {$12\bot$};
            	\node [smallworld, right of=zz] (oo)
                {$11\bot$};
            	\node [smallworld, right of=zz, below of=zz] (oz)
                {$10\bot$};
				\node [smallworld, right of=oo] (tt)
                {$23\bot$};
                \node [smallworld, right of=tt] (h)
                {$23\top$};

                \path [line,font=\scriptsize] (zz) edge
                node  {\(m^{+}\)} (ot);
               \path [line,font=\scriptsize] (zz) edge
                node  {\(m^{=}\)} (oo);
               \path [line,font=\scriptsize] (zz) edge
               node[below left]  {\(m^{-}\)} (oz);

                \path [line,font=\scriptsize] (ot) edge
                node  {\(m^{+}\)} (tt);
                \path [line,font=\scriptsize] (oo) edge
                node  {\(m^{+}\)} (tt);

              \path [line,font=\scriptsize] (oz) edge
               node[below right]  {\(m^{+}\)} (tt);
             \path[line, font=\scriptsize] (tt) edge[bend left=60]
           node {$halt$} (h);
             \draw[dotted] (2,1.5) -- (2.5,1.5); 
            \draw[dotted] (2,1.5) -- (2.5,1.5); 

             \draw[dotted] (3.5,0) -- (3.7,0);
            \draw[dotted] (2,-1.5) -- (2.5,-1.5);

			\end{tikzpicture}
			\label{fig:controller}
		}}
        \caption{The IIS for the Autonomous Robot Example.}
		\label{fig:ex:me}
        \vspace{-0.5cm}
	\end{figure}
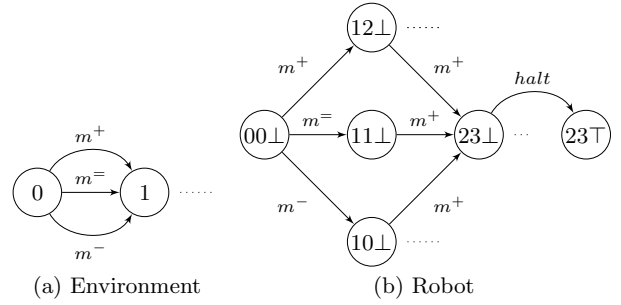

\end{example}

\subsection{Template Agent and Parameterised Interleaved Systems}

Several protocols are designed for an unbounded number of identical
participants. Cache coherence, mutual exclusion, and voting protocols
are typical examples in which the number of participants
(caches, processes, and voters respectively) is independent of the
design process. Multi-party
negotiation protocols, auctions and open MAS in general also have this property. In the
following we develop a formal model of a parameterised multi-agent
system, composed of an arbitrary number of identical agents, that can
be used in these circumstances. Given that the number of agents is a
priori unknown, a parameterised system describes an infinite family of
systems where an instance in the family, or concrete instantiation, is
obtained by specifying the number of agents in the system. Formally,
we introduce below parameterised interleaved interpreted systems
(PIIS), an extension of interleaved interpreted systems, to model the
aforementioned classes of systems.

We write \(\mathcal{T}(n)\) to denote a PIIS, where \(n \geq 1 \) is
the parameter specifying the number of agents, each constructed from a
template agent \(\mathcal{T}\).  The template agent is an interleaved
agent encoded with a set of synchronous actions and a set of
asynchronous actions. As it will be clear below, if the action
performed in a global transition is a synchronous action, then all
agents participate in the global action by performing the same
synchronous action.  However, if the action performed in a global
transition is an asynchronous action, then exactly one agent
participates in the global action.  Therefore, all agents synchronise
at any time step in which a synchronous action is performed.

\begin{definition}{(Template Agent)} \label{def:template-agent}
    Given a set of propositions \(\mathit{AP}\), a \emph{template
    agent} is a tuple \(\mathcal{T} = \langle L,\iota,Act,P,t,h
    \rangle\), where \(L\) is a finite nonempty set of template states
    from which  \( \iota \in L \) is the unique initial template
    state,  \( Act = Act^{S} \cup Act^{A} \cup \epsilon \) is a finite
    set of template actions, where \(Act^{S}\) is a set of synchronous
    actions, \(Act^{A}\) is a set of asynchronous actions with 
    \(Act^{S} \cap Act^{A} \cap \{\epsilon\} = \emptyset\), 	\( P:
    L \rightarrow \wp(Act) \) is the protocol such that for all \( l
    \in L \), \(\epsilon \in P(l) \), \( t : L \times Act \rightarrow
    L \)  is the deterministic template evolution function such that
    for all \(l \in L\), \(t(l,\epsilon)=l \), and  \( h : L \rightarrow
    \wp(\mathit{AP}) \) is a labelling function for the template
    states.
\end{definition}

Given a template agent \(\mathcal{T}\), \(\mathcal{T}(n)\) denotes the
parallel composition of \(n\) concrete agents\footnote{When it is
  clear from the context, we write ``agent'' instead of ``concrete
  agent''.}  in \(\mathcal{A}=\{1,\cdots,n\}\). Each agent \(i \in
\mathcal{A}\) is obtained by subscripting the states and actions of
\(\mathcal{T}\) as follows: \( L_{i} = L \times \{i\} \), \( Act_{i} =
Act^{S} \cup Act_{i}^{A} \cup \epsilon_{i}\), where \(Act_{i}^{A} =
Act^{A} \times \{i\} \); synchronous template actions are not
subscripted. For a concrete action \(a \in Act_{i}\), we write
\(tl(a)\) to refer to the corresponding template action; analogously,
for a concrete state \(l_{i} \in L_{i}\), we write \(tl(l_{i})\) to
refer to the corresponding template state \(l\). The local protocol \(
P_{i}: L_{i} \rightarrow \wp(Act_{i}) \) of the \(i\)-th agent is
defined by \( a \in P_{i}(l_{i}) \) iff \( tl(a) \in P(l) \). The
evolution function \( t_{i}: L_{i} \times Act_{i} \rightarrow L_{i} \)
of the \(i\)-th agent is defined by \( t_{i}(l_{i},a)=l_{i}' \) iff \(
t(l,tl(a))=l' \). We associate with each \(i \in \mathcal{A}\) a local
labelling function \(V_{i} : L_{i} \rightarrow \wp(\mathit{AP} \times \{i\})\)
defined by \(p_{i} \in V_{i}(l_{i})\) iff \(p \in h(l)\).

The global transitions we consider in PIIS are as in
Definition~\ref{def:isemantics}.  A global transition from a global
state \(g\) complies with the definition if either a synchronous
action is enabled for all agents in \(g\) or an asynchronous action
\(a_{i} \in Act_{i}\) is enabled for an agent \(i \in \mathcal{A}\) in
\(g\).  Indeed, if \(a \in Act^{S}\), then \(Agent(a) = \mathcal{A}
\); if \(a_{i} \in Act_{i}^{A} \), for some \(i \in \mathcal{A}\),
then \(Agent(a) = \{i\}\). Therefore, synchronous actions play the
role of shared actions in IIS, but here synchronous actions are
shared by all agents.  We now define parameterised interleaved
interpreted systems.

\begin{definition}{(Parameterised Interleaved Interpreted Systems)} 
    \label{def:parameterised-system} 
    Given a natural number \(n \geq 1\) and a template agent
    \(\mathcal{T} = \langle L,\iota,Act,P,t,h \rangle\), a
    \emph{parameterised interleaved interpreted system} (PIIS),
    composed of \(n\) concrete agents, is a tuple \( \mathcal{T}(n) =
    \langle G^{n},\iota^{n},\Pi^{n},V^{n} \rangle\), where \(G^{n} = L
    \times [n]\) is a set of global states, \(\iota^{n} =
    (\iota_{1},\ldots,\iota_{n})\) is an initial (global) state,
    \(\Pi^{n} = \displaystyle \bigcup_{g \in G^{n}} \Pi(g)\) is
    the set of all interleaved paths originating from all states in
    \(G^{n}\), and \(V^{n}: G^{n} \rightarrow \wp(\mathit{AP} \times
    \mathcal{A})\) is a labelling function defined by \(p_{i} \in
    V^{n}(g)\) iff \(p_{i} \in V_{i}(g(i))\).
\end{definition}

Given a template agent, the above definition denotes an infinite family
of concrete systems. A member of the family, which we call an instance
of the parameterised system, is obtained by fixing the value of the
parameter \(n\).

\subsection{The Specification Language IACTL\(^*\)K\(_{\mathbf{-X}}\)} 

Temporal-epistemic logic has been widely adopted to express the
properties of agents in a MAS. 
However, we cannot use propositional temporal-epistemic logics
to reason about an unbounded number of agents. To see this, consider
the parameterised variant of the autonomous robot and suppose that
we want to express the property: ``for every \(i \in \mathcal{A}\),
whenever \(i\) halts, then it knows that every other robot's position 
is within the goal region''. This property encodes all distinct
pairs of robots.  Therefore, to express the property for an AR
composed of \(n\) robots we need to construct a  formula
composed of \(2!\binom{n}{2}\) conjuncts.  Instead we would like to
express properties that are independent of the number of agents in the
system, as if we were able to quantify over the agents. To overcome these
shortcomings we introduce the indexed temporal-epistemic logic
IACTL\(^*\)K\(_{-X}\).  Indexed logics are commonly used in
parameterised systems~\cite{emerson+00}.

IACTL\(^*\)K\(_{-X}\) combines indexed epistemic modalities with the
universal fragment of CTL\(^*_{-X}\)~\cite{clarke+89a} (the logic
CTL\(^*\), without the next-time operator, extended with indexed
atomic propositions).  We consider a stuttering-insensitive logic,
i.e., a logic insensitive to repeated occurrences of the same state,
or equivalently a logic without the next-time operator
\cite{clarke+99a}). This is because the next-time operator can be used to count
the number of agents in the system~\cite{clarke+89a,emerson2003model}
resulting in the parameterised verification problem being
undecidable~\cite{emerson2003model}.

Intuitively, any IACTL\(^*\)K\(_{-X}\) formula \(\varphi\) represents
an ACTL\(^*\)K\(_{-X}\) formula for each concrete system
\(\mathcal{T}(n)\), \(n \geq c\), where \(c\) is the number of unique
indices contained in \(\varphi\); the formula corresponding to
\(\mathcal{T}(n)\) is the conjunction of all formulae that can be
constructed from \(\varphi\) by instantiating the indices with every
\(c\)-tuple of distinct agents in \(\mathcal{T}(n)\).

\subsubsection{Syntax and semantics of IACTL\(^*\)K\(_{-X}\)}

We assume a set \(\mathit{VS}\) of variable symbols which we use to index the
atomic propositions and the epistemic modalities.  There are two types
of formulas in IACTL\(^*\)K\(_{-X}\): (i) state formulas which are true
at a state and (ii) path formulas which are true on a path.

\begin{definition}{(Syntax of IACTL\(^*\)K\(_{-X}\))}
    \label{def:iactl*-syntax}
    The state and path formulae of IACTL\(^*\)K\(_{-X}\) over a set
    \(\mathit{AP}\) of propositions and a set \(\mathit{VS}\) of variable symbols are
    inductively defined as follows:
	\begin{itemize}
        \item S1. if \( p \in \mathit{AP} \) and \(v \in \mathit{VS}\), then  \(p_{v}\)
            and \(\neg p_{v}\) are state formulas;
        \item S2. if \( \varphi \) and \( \psi\) are state formulas,
            then  \( \varphi \wedge \psi \), \( \varphi \vee \psi \)
            and \(K_{v}\varphi\) (\(v \in \mathit{VS} \)) are  state formulas;
        \item S3. if \(\varphi\) is a state formula with
            exactly \(J \subseteq \mathit{VS}\) variable symbols, then
            \(\bigwedge_{J} \varphi\) is a state formula;
        \item S4. if \(\varphi\) is a path formula, then
            \(A(\varphi)\) is a state formula;
		\item P1. any state formula \(\varphi\) is also a path formula;
        \item P2. if \(\varphi\) and \(\psi\) are path formulas, then
            \(\varphi \wedge \psi\) and \(\varphi \vee \psi\) are
            path formulas;
        \item P3. if \(\varphi\) and \(\psi\) are path formulas, then
            \(U(\varphi,\psi)\) and \(R(\varphi,\psi)\) are  path
            formulas.
	\end{itemize}
\end{definition}

The \(\bigwedge_{J}\) connective serves as a universal agent
quantifier ranging over all \(|J|\)-tuples of pairwise distinct
agents. Given a formula \(\varphi\), a variable \(v \in \mathit{VS}\),
occurring in \(\varphi\), is said to be bound if it is in the scope of
a \(\bigwedge_{J}\) connective; otherwise, \(v\) is said to be free. A
formula in which there are no free occurrences of variables is said to
be a sentence. We here consider only sentences. For an
IACTL\(^*\)K\(_{-X}\) formula \(\varphi\), we write \(\varphi(J)\) to
indicate that: (i) all variables in \(J \subseteq \mathit{VS}\) and
only them occur free in \(\varphi\), and (ii) \(\varphi\) does not
contain any \(\bigwedge_{J}\) connectives.  The path quantifier \(A\)
stands for ``for all paths''. The temporal operators \(U\) and \(R\)
represent ``until'' and ``release'' respectively; the formula
\(U(\varphi,\psi)\) is read as ``\(\varphi\) holds continuously until
\(\psi\) holds'', whereas the formula \(R(\varphi,\psi)\) is read as
``\(\varphi\) releases \(\psi\)''. The operator \(K\) denotes the
epistemic modality; \(K_{v}\varphi\) is read as ``each concrete agent
\(i \in \mathcal{A}\) knows \(\varphi\)''. Since we consider sentences
only, \(v\) is always bound by a \(\bigwedge_{J}\) connective;
therefore \(v\) ranges over all agents.

Consider the AR again; we can now easily express the properties
previously stated by considering the IACTL\(^*\)K\(_{-X}\) formula
\( \varphi_{AR1} = \bigwedge_{\{i,j\}} AG \left(h_i \rightarrow K_i
 g_j \right)\), where the atomic propositions $h_i, g_i$ hold
 respectively in the states where robot $i$ has halted and is within
 the goal region.

The specifications we consider in this paper are of the form
\(\bigwedge_{J} \varphi(J)\).  Since these formulas range over all
\(|J|\)-tuples of distinct agents, a standard model checking procedure
would have to consider every instantiation of \(\varphi(J)\). However,
a result we obtain is that model checking a formula \(\bigwedge_{J}
\varphi(J) \in\)  IACTL\(^*\)K\(_{-X}\) \emph{can be reduced to model
checking a single instantiation of} \(\varphi(J)\), thereby
simplifying the complexity of the model checking procedure. Note that
an instantiation of \(\varphi(J)\) is an ACTL\(^*\)K\(_{-X}\) formula,
built as follows:

\begin{definition}
    ACTL\(^*\)K\(_{-X}\) formulae over a set \(\mathit{AP}\) of atomic
    propositions and a set \(\mathcal{A}\) of agents are defined as in
    Definition~\ref{def:iactl*-syntax} but omitting (S3) and replacing
    (S1) and (S2) with:
         S1'. if \( p \in \mathit{AP} \) and \(i \in \mathcal{A}\), then
            \(p_{i}\) and \(\neg p_{i}\) are state formulas;
         S2'. if \( \varphi \) and \( \psi\) are state formulas,
            then  \( \varphi \wedge \psi \), \( \varphi \vee \psi \)
            and \(K_{i}\varphi\) (\(i \in \mathcal{A} \)) are  state
            formulas;
\end{definition}

For an ACTL\(^*\)K\(_{-X}\) formula \(\varphi\), we write
\(\varphi(J)\) (\(J \subseteq \mathcal{A}\)) to indicate that for each
subformula \(K_{i}\psi\) and each proposition \(p_{j}\) of \(\varphi\)
we have that \(i,j \in J\). We write ACTL\(^*\)K\(_{-X}^{J}\) for the
restriction of ACTL\(^*\)K\(_{-X}\) to all formulae of the form
\(\varphi(J)\).
We interpret IACTL\(^*\)K\(_{-X}\) formulae over PIIS\@. The temporal
modalities are interpreted over the global transition relation and the
epistemic modalities are interpreted over the equality of the local
components of the global states.

\begin{definition}{(Satisfaction)}
 \label{def:ctl*-semantics}
Let \(\mathcal{T}(n) = \langle G,\iota,\Pi,V \rangle\) be a
parameterised interleaved interpreted system, let
\(\pi=g^{1},a^{1},g^{2},\ldots\) be a path of \(\mathcal{T}(n)\), let
\(g \in G\) be a state of \(\mathcal{T}(n)\), and let \(\varphi\) be
an IACTL\(^*\)K\(_{-X}\) formula.  Satisfaction of \(\varphi\) at g,
denoted \( (\mathcal{T}(n),g) \models \varphi\), or simply \(g \models
\varphi\), and satisfaction of \(\varphi\) on \(\pi\), denoted
\((\mathcal{T}(n),\pi) \models \varphi\), or just \(\pi \models
\varphi\), is inductively defined as follows:

\begin{table}[h]
    \begin{tabular}{lllp{4.4cm}}
        S1. & \( g  \models p_{i}\) &  iff & \( p_{i} \in V_{i}(g_{i})
        \); \\
        & \( g  \models \neg p_{i}\) &  iff & \( \text{not } g \models
        p_{i} \), for \(p_{i} \in \mathit{AP} \times \mathcal{A}\); \\
        S2. & \( g \models \varphi \wedge \psi\) & iff & \( g \models
        \varphi \text{ and } g \models \psi\); \\
         & \( g \models \varphi \vee \psi\) & iff & \( g \models
         \varphi \text{ or } g \models \psi\); \\
         & \( g \models K_{i} \varphi \) & iff & \(g' \models
         \varphi\) for every \(g' \in G\) such that \(g_i = g'_i\);
         \\
        S3. & \(g \models \bigwedge_{J} \varphi(J)\) & iff & \(g
        \models \varphi(C)\) for every \(C \in \{I \; | \; I \subseteq
            \mathcal{A} \text{ and } |I|=|J|\}\); \\
        S4. & \(g \models A \varphi\) & iff & \(\pi \models \varphi\)
        for every path \(\pi\) such that \(\pi(1)=g\); \\
        P1. & \( \pi \models \varphi \) &  iff & \( \pi(1) \models
        \varphi \) for any state formula \(\varphi\); \\
        P2 & \( \pi \models \varphi \wedge \psi\) & iff & \(\pi
        \models \varphi \text{ and } \pi \models \psi\); \\
        & \( \pi \models \varphi \vee \psi\) & iff & \(\pi
        \models \varphi \text{ or } \pi \models \psi\); \\

        P3 & \(\pi \models U(\varphi,\psi)\) & iff &  there is  an \(i
        \geq 1\) such that \( \pi[i] \models \psi \)  and \(\pi[j]
        \models \varphi\) for all \(1 \leq j < i\); \\
        & \( \pi \models R(\varphi,\psi)\) & iff & for every \(i\), if
        \( \pi[j] \nvDash \varphi \), for all \(1 \leq j < i \), then
        \(\pi[i] \models \psi\).
	\end{tabular}
\end{table}
\end{definition}
\vspace{-0.5cm}
We use the following abbreviations: \( \top \stackrel{def}{\equiv}
p_{v} \vee \neg p_{v} \), \(\bot \stackrel{def}{\equiv} p_{v} \wedge
\neg p_{v}\), for some \(p \in \mathit{AP}\) and \(v \in
\mathit{VS}\), \(F\varphi \stackrel{def}{\equiv} U(\top,\varphi)\)
(``Eventually \(\varphi\)''), \(G\varphi \stackrel{def}{\equiv}
R(\bot,\varphi)\) (``Always \(\varphi\)''). A formula \(\varphi\) is
said to be true in \(\mathcal{T}(n)\), denoted \(\mathcal{T}(n)
\models \varphi\), iff \((\mathcal{T}(n),\iota) \models \varphi\).

\subsubsection{Symmetry reduction for ACTL\(^*\)K\(_{-X}\)}

Symmetry reduction techniques have been used to reduce the complexity
of model checking temporal-epistemic properties of multi-agent
systems~\cite{cohen2009symmetry}. Since a PIIS is
composed of identical agents, intuitively, there is an inherent
symmetry in the system that we can exploit. Indeed, we adapt a 
result from reactive systems~\cite{emerson+96b} and we show that
an IACTL\(^*\)K\(_{-X}\) formula \(\bigwedge_J \varphi(J)\) is
equivalent to a single instantiation \(\varphi(\{1,\ldots,|J|\})\) of
\(\varphi(J)\).


\begin{lemma} \label{lemma:symmetry-reduction}
\( \mathcal{T}(n) \models \bigwedge_{J} \varphi(J) \text{ iff }
\mathcal{T}(n) \models \varphi(\{1,\ldots,|J|\}).
\)
\end{lemma}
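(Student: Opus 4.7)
The plan is to prove the two directions separately. The forward direction is immediate from clause S3 of Definition~\ref{def:ctl*-semantics}: since \(\{1,\ldots,|J|\}\) is itself a subset of \(\mathcal{A}\) of size \(|J|\), it is one of the \(C\) ranged over by the universal agent quantifier, so \(\mathcal{T}(n) \models \bigwedge_J \varphi(J)\) gives \(\mathcal{T}(n) \models \varphi(\{1,\ldots,|J|\})\) as a special case.

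For the reverse direction I would exploit the symmetry built into PIIS: every concrete agent is a subscripted copy of the same template \(\mathcal{T}\). Concretely, for each permutation \(\sigma\) of \(\mathcal{A} = \{1,\ldots,n\}\) I would define an action on global states by \((\sigma\cdot g)_{\sigma(i)} = \mathit{rename}_{i\to\sigma(i)}(g_i)\), where \(\mathit{rename}\) replaces the agent-subscript of the local component; and on actions by fixing every synchronous action and sending each asynchronous \(a_i\) to \(a_{\sigma(i)}\). Using Definitions~\ref{def:template-agent} and~\ref{def:parameterised-system}, I would then verify that \(\sigma\) is an automorphism of \(\mathcal{T}(n)\): the protocols, evolution functions and labellings of the concrete agents are all obtained by the same template rules, so \(g \stackrel{a}{\rightarrow} g'\) iff \(\sigma\cdot g \stackrel{\sigma\cdot a}{\rightarrow} \sigma\cdot g'\), the indistinguishability relation \(g_i = g'_i\) is transported to \((\sigma\cdot g)_{\sigma(i)} = (\sigma\cdot g')_{\sigma(i)}\), and \(p_i \in V^n(g)\) iff \(p_{\sigma(i)} \in V^n(\sigma\cdot g)\). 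Crucially, \(\iota^n\) is fixed by every \(\sigma\), because all its components equal the initial template state.

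Next, by a routine structural induction on an ACTL\(^*\)K\(_{-X}^J\) formula \(\psi(J)\), I would prove the transfer property
\[
g \models \psi(J) \quad \text{iff} \quad \sigma\cdot g \models \sigma\cdot\psi(J),
\]
where \(\sigma\cdot\psi(J)\) denotes the formula obtained from \(\psi(J)\) by renaming each free index \(i\) to \(\sigma(i)\). The base case uses the fact that \(p_i \in V^n(g)\) iff \(p_{\sigma(i)} \in V^n(\sigma\cdot g)\); the \(K_i\) case uses that local indistinguishability at \(i\) is mapped to indistinguishability at \(\sigma(i)\); the path quantifier and \(U,R\) cases use that transitions are preserved, so paths are mapped bijectively to paths.

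Finally, given any \(C \subseteq \mathcal{A}\) with \(|C| = |J|\), I pick a permutation \(\sigma\) mapping \(\{1,\ldots,|J|\}\) to \(C\). Applying the transfer property at the initial state, together with \(\sigma\cdot\iota^n = \iota^n\), the hypothesis \(\iota^n \models \varphi(\{1,\ldots,|J|\})\) yields \(\iota^n \models \varphi(C)\). Since \(C\) was arbitrary, clause S3 gives \(\mathcal{T}(n) \models \bigwedge_J \varphi(J)\). The main obstacle is the bookkeeping for the permutation action on states and actions (keeping the template-level \(tl(\cdot)\) structure and the agent-subscripts straight) and proving it is an automorphism; once that is done, both the induction and the closing argument are straightforward.
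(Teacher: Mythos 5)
Your proposal is correct and follows essentially the same route as the paper's own (sketched) proof: the forward direction is immediate from clause S3, and the reverse direction uses a permutation $\zeta$ of agents that is an automorphism of $\mathcal{T}(n)$ fixing $\iota^n$, preserving transitions, local indistinguishability and the labelling, so that satisfaction transfers from $\varphi(\{1,\ldots,|J|\})$ to $\varphi(C)$ for an arbitrary $C$. Your version merely spells out the details (the action on states and actions, and the structural induction) that the paper leaves implicit.
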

\begin{proof}{\emph{(Sketch)}} 
    (\(\Rightarrow\)) Obvious. (\(\Leftarrow\)) Suppose that
    \(\mathcal{T}(n) \models \varphi(\{1,\cdots,k\})\) and let
    \(J'=\{j_1,\cdots,j_k\}\) be an arbitrary set of \(k\) agents. Let
    \(\zeta: \mathcal{A} \rightarrow \mathcal{A}\) be a bijective
    mapping such that \(\forall i \in
    \{1,\cdots,k\}:\pi(i)=j_i\). Given an object \(o\) (either a state
    or a formula), let \(\zeta(o)\) denote the object \(o'\) obtained by
    replacing each occurrence of any \(i \in \mathcal{A}\) with
    \(\zeta(i)\). As \(g \stackrel{a}{\rightarrow} g'\) is iff \(\zeta(g)
    \stackrel{\zeta(a)}{\rightarrow}\zeta(g')\),   and \(g_i = g'_i\) is
    iff \((\zeta(g))_{\zeta(i)} = (\zeta(g'))_{\zeta(i)}\), we get that
    \((\mathcal{T}(n),\zeta(\iota)) \models \zeta(\varphi(\{1,\cdots,k\})\),
    and therefore \((\mathcal{T}(n),\zeta(\iota)) \models \varphi(J')\).
    As \(\zeta(\iota)=\iota\), we get that \((\mathcal{T}(n),\iota) \models
    \varphi(J)\), therefore \(\mathcal{T}(n) \models \bigwedge_J
    \varphi(J)\).
\end{proof}

Given the semantics of IACTL\(^*\)K\(_{-X}\), model checking a formula
\( \bigwedge_{J} \varphi(J)\) on a system $\mathcal{T}(n)$ is equivalent 
to model checking an ACTL\(^*\)K\(_{-X}\)
formula of \( c!\binom{c}{|J|} \) conjuncts of the form
\(\varphi(C)\), where \(C \in \{ I \; | \; I \subseteq \{1,\cdots,c\}
\text{ and } |I|=|J|\} \).  By the above lemma we can model check 
a single conjunct. For example, model
checking the formula \(\bigwedge_{\{i,j\}} AG \left(
h_i \rightarrow K_i g_j
\right)\) can be reduced to model checking the simpler 
formula \(AG ( h_{1} \rightarrow K_{1} 
g_{2})\).

\subsubsection{Invariance of ACTL\(^*\)K\(_{-X}\)}

As noted above, cutoff techniques concern the identification  of
a system instance, called the cutoff instance, which can be used to
check whether a given property holds for all system instances. A
notion of equivalence between the system instances is often used to
show this result.
Stuttering-insensitive logics are accompanied with the standard
notion of stuttering simulation~\cite{penczek+00}. A system stuttering
simulates another system if for every behaviour of the latter, there
is a stuttering equivalent behaviour of the former. Informally, two
behaviours are stuttering equivalent if the behaviours coincide when
each sequence of stutter steps (i.e., steps that do not affect the
labelling of the states) are collapsed onto a single step. Since we consider
only universal path quantification, it follows that any
ACTL\(^*\)\(_{-X}\) formula satisfied by the simulating model is also
satisfied by the simulated model. Below we extend the notion of
stuttering equivalence to ACTL\(^*\)K\(_{-X}\). Since our
specifications are of the form \(\varphi(J)\), referring only to
agents in \(J \subseteq \mathcal{A}\), we project the valuation function
onto \(J\). This projection, denoted \(V|_{J}\), is defined by
\(V|_{J}(g) = V(g) \cap \{p_{i} \svs i \in J\}\), for every state \(g
\in G\). 


\begin{definition}{(\(J\)-Stuttering simulation)} 
    \label{def:j-stuttering} 
    A relation \\ \(\sim_{Jss} \subseteq G \times G'\) is a
    \(J\)-stuttering simulation between two models
    \(\mathcal{M}=\langle G,\iota,\Pi,V \rangle\) and \(\mathcal{M}' =
    \langle G',\iota',\Pi',V' \rangle\) if the following conditions
    hold:
	\begin{enumerate}
		\item \(\iota \sim_{Jss} \iota'\);
		\item if \(g^1 \sim_{Jss}  g'^1\) then 
			\begin{enumerate}
                \item if \(g_i^1 = g^{2}_i\), for \(i \in J\), then
                    \(g'^1_i = g'^{2}_i\) for some \(g'^{2}\) such
                        that \(g^{2} \sim_{Jss} g'^{2}\);
            \item \(V|_{J}(g^1)=V|_{J}(g'^1)\) and for every path \(\pi\)
                of \(\mathcal{M}\) that starts at \(g^1\), there is a
                path \(\pi'\) of \(\mathcal{M}'\) that starts at
                \(g'^1\), a partition \(B_{1},B_{2} \ldots\) of \(\pi\),
                and a partition \(B_{1}',B_{2}',\ldots\) of \(\pi'\)
                such that for each \(j \geq 1\), \(B_{j}\) and
                \(B_{j}'\) are nonempty and finite, and every state in
                \(B_{j}\) is related by \(\sim_{Jss}\) to every state
                in \(B_{j}'\).
			\end{enumerate}
	\end{enumerate}
\end{definition}

A model \(\mathcal{M}'\) \(J\)-stuttering simulates a model
\(\mathcal{M}\), denoted \(\mathcal{M} \leq_{Jss} \mathcal{M}'\), if
there is a \(J\)-stuttering simulation between \(\mathcal{M}\) and
\(\mathcal{M}'\). Two models \(\mathcal{M}\) and \(\mathcal{M}'\) are
called \(J\)-stuttering simulation equivalent if \(\mathcal{M}
\leq_{Jss} \mathcal{M}'\) and \(\mathcal{M}' \leq_{Jss}
\mathcal{M}\). Any ACTL\(^*\)K\(_{-X}^{J}\) formula is preserved under
\(J\)-stuttering simulation equivalence.

\begin{theorem}
Let \(\mathcal{M}\) and \(\mathcal{M}'\) be two \(J\)-stuttering
simulation equivalent models.  Then, \((\mathcal{M},\iota) \models
\varphi \text{ iff } (\mathcal{M}',\iota') \models \varphi \), for any
ACTL\(^*\)K\(_{-X}^{J}\) formula \(\varphi\).
\end{theorem}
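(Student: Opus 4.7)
The plan is to proceed by mutual structural induction on state and path formulas of ACTL\(^*\)K\(_{-X}^{J}\), proving the stronger claim that whenever \(g \sim_{Jss} g'\) we have \(g \models \varphi\) iff \(g' \models \varphi\) for every state formula \(\varphi\), and whenever \(\pi\) and \(\pi'\) are stuttering-matched paths (i.e., they admit block partitions as postulated by condition~2(b) of Definition~\ref{def:j-stuttering}) we have \(\pi \models \psi\) iff \(\pi' \models \psi\) for every path formula \(\psi\). Since \(\iota \sim_{Jss} \iota'\) by condition~1, applying the strengthened claim at the initial states yields the theorem. By symmetry of \(J\)-stuttering simulation equivalence, it suffices to fix one simulation \(\mathcal{M} \leq_{Jss} \mathcal{M}'\) and establish that \((\mathcal{M}',g') \models \varphi\) implies \((\mathcal{M},g) \models \varphi\); the reverse implication then comes from the simulation in the opposite direction.

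The atomic cases reduce to \(V|_{J}(g)=V|_{J}(g')\) from condition~2(b), since ACTL\(^*\)K\(_{-X}^{J}\) restricts propositional indices to \(J\). The Boolean cases are immediate from the inductive hypothesis. For \(K_i\varphi\) with \(i \in J\), condition~2(a) translates any state \(g^{1}\) with \(g^{1}_i = g_i\) into a state \(g'^{1}\) with \(g'^{1}_i = g'_i\) and \(g^{1} \sim_{Jss} g'^{1}\), so the universal quantification over epistemic alternatives in the semantics of \(K_i\) is preserved once we invoke the inductive hypothesis on \(\varphi\). For \(A\varphi\), any path \(\pi\) of \(\mathcal{M}\) starting at \(g\) is matched by condition~2(b) with a path \(\pi'\) of \(\mathcal{M}'\) starting at \(g'\) together with the requisite block partitions; from \((\mathcal{M}',\pi') \models \varphi\) the path-level inductive hypothesis yields \((\mathcal{M},\pi) \models \varphi\).

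The main obstacle lies in the path-level induction for \(U(\varphi,\psi)\) and \(R(\varphi,\psi)\), since their operands may be nested path formulas that must be re-evaluated on arbitrary suffixes. The essential auxiliary fact I would establish is a stuttering-suffix lemma: for stuttering-matched paths \(\pi,\pi'\) with partitions \(\{B_j\}\) and \(\{B_j'\}\), and any indices \(k\) in \(B_{j_0}\) and \(k'\) in \(B_{j_0}'\), the suffixes \(\pi[k]\) and \(\pi'[k']\) are again stuttering-matched, with block partitions obtained by truncating the \(j_0\)-th block on each side and keeping all later blocks intact; in particular \(\pi(k) \sim_{Jss} \pi'(k')\). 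Granted this lemma, the until case proceeds as follows: if \(\pi' \models U(\varphi,\psi)\) is witnessed at an index \(i^*\) of \(\pi'\) lying in a block \(B_{j_0}'\), pick any \(k\) in \(B_{j_0}\); the stuttering-suffix lemma and the inductive hypothesis yield \(\pi[k] \models \psi\), and for each \(l<k\) the suffix \(\pi[l]\) is stuttering-matched to some suffix \(\pi'[l']\) with \(l'<i^*\), which therefore satisfies \(\varphi\), hence \(\pi[l] \models \varphi\). The release case is dual. Proving the stuttering-suffix lemma is a careful bookkeeping on the block structure and constitutes the only genuinely non-routine step; it is the adaptation of the classical preservation of ACTL\(^*_{-X}\) under stuttering simulation to the epistemic setting, the knowledge modality having already been absorbed via condition~2(a).
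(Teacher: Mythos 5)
Your proof is correct and takes essentially the same route as the paper: the paper's proof simply cites the known preservation of ACTL\(^*_{-X}\) under stuttering simulation and observes that the epistemic modality is handled by structural induction via condition 2(a) of the \(J\)-stuttering simulation, and your write-up inlines exactly that induction (including the stuttering-suffix lemma for \(U\) and \(R\) that the cited result encapsulates). One small repair: in the until case you should take \(k\) to be the \emph{first} index of \(B_{j_0}\) rather than an arbitrary one, since otherwise an index \(l<k\) lying inside \(B_{j_0}\) itself is only matched to suffixes of \(\pi'\) guaranteed to satisfy \(\psi\), not \(\varphi\).
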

\begin{proof}
    Stuttering simulation equivalence is known to preserve
    ACTL\(^*_{-X}\) formulae~\cite{penczek+00}. Since atomic
    propositions in ACTL\(^{*J}_{-X}\) formulae refer only to agents
    \(J \subseteq \mathcal{A}\), \(J\)-stuttering simulation
    equivalence preserves ACTL\(^{*J}_{-X}\) formulae.  Using
    induction on the structure of \(\varphi\) it is easy to show that
    \(J\)-stuttering simulation equivalence also preserves
    ACTL\(^*\)K\(_{-X}^{J}\) formulae.
\end{proof}

It follows that if we are able to show that the cutoff instance
\(\mathcal{T}(c)\) is \(J\)-stuttering equivalent to an arbitrary
system instance \(\mathcal{T}(n)\), then we can use \(\mathcal{T}(c)\)
to check whether a formula \(\bigwedge_{J}\varphi(J) \in \)
IACTL\(^*\)K\(_{-X}\) holds for an arbitrary number of agents.

\section{MODEL CHECKING piis}

We now present a technique for model checking parameterised interleaved
interpreted systems. In particular, we propose an efficient and
automated methodology for answering the following verification query:
\[ \forall n \geq |J| : \mathcal{T}(n) \models \psi, \text{ where } \psi =
    \bigwedge_{J} \varphi(J) \in  \text{IACTL}^*\text{K}_{-X}\] 
In other
words we would like to check whether the property $\varphi(C)$ holds for \emph{any number}
\(n \geq |J|\) of agents in the system, and for any \(|J|\)-tuple \(C\) of
distinct agents. Note that the number of systems we would like to verify is unbounded.
Therefore, traditional techniques to handle the state explosion
problem cannot be used here. 
The key observation is
that in certain circumstances it is sufficient to analyse only a
finite number of systems to deduce properties about any larger system.
Inspired by the work on cutoffs in the context of reactive
systems~\cite{emerson+00,emerson+95,hanna+10}, we say that a \emph{MAS
cutoff} \(c\) is a value of the system parameter for which the system
instance \(\mathcal{T}(c)\) exhibits all the behaviour admitted by any
system instance \(\mathcal{T}(n)\), \(n \geq c\), 
with respect to a certain specification being considered.

\begin{definition}{(MAS Cutoff)}
    Let \(\mathcal{T}(n)\) be a parameterised interleaved interpreted
    system and let \(\psi \in\) IACTL\(^*\)K\(_{-X}\) be of the form
    \(\bigwedge_{J}\varphi(J)\). A natural number \(c \geq |J|\) is
    said to be a \emph{MAS cutoff} for \(\psi\) if \(\mathcal{T}(c)
    \models \psi \Leftrightarrow \forall n \geq c : \mathcal{T}(n) \models
    \psi\).
\end{definition}

It follows that if a cutoff can be identified, then model checking an
infinite family of systems can be reduced to model checking all system
instances up to the cutoff. Cutoff identification methodologies are
typically accompanied by the following shortcomings: (i) either the
cut-off is not guaranteed to be the
smallest~\cite{emerson+00,emerson+95}, or (ii) the cut-off is not
guaranteed to exist leading to incomplete
methodologies~\cite{hanna+10}.  By contrast, we here present a sound
and complete methodology for identifying the \emph{smallest cutoff} in
model checking PIIS. As we will see later, to achieve this we pay a
price in terms of the range of systems we can apply our results to.
The following lemma shows that the smallest cutoff for an
ACTL\(^*\)K\(_{-X}^{\{1,\cdots,k\}}\) formula \(\varphi\) is precisely
\(k\), the total number of agents appearing in the epistemic
modalities and the propositions. 

\begin{lemma} \label{lemma:actl*k-se}
    If \(\varphi(\{1,\cdots,k\})\) is an
    ACTL\(^*\)K\(_{-X}^{\{1,\cdots,k\}}\) formula,
    then
	\(
        \mathcal{T}(n) \models
        \varphi(\{1,\cdots,k\}) \text{ iff } \mathcal{T}(k) \models
        \varphi(\{1,\cdots,k\})
    \), for all \(n \geq k\).
\end{lemma}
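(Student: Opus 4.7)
The plan is to establish that $\mathcal{T}(n)$ and $\mathcal{T}(k)$ are $\{1,\ldots,k\}$-stuttering simulation equivalent and then to invoke the preservation theorem stated above. Set $J = \{1,\ldots,k\}$ and take as the candidate relation $\sim_{Jss}$ the ``projection equality'' $g \sim_{Jss} g'$ iff $g_i = g'_i$ for every $i \in J$. The initial states are trivially related since all agents share the template initial state $\iota$, and the $J$-labelling clause $V^n|_J(g) = V^k|_J(g')$ is immediate from the definition of $\sim_{Jss}$.

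For $\mathcal{T}(n) \leq_{Jss} \mathcal{T}(k)$, take any path of $\mathcal{T}(n)$ and project it onto the first $k$ agents. Asynchronous actions of agents in $\{k+1,\ldots,n\}$ leave every $J$-local component unchanged and are absorbed as stutter steps into the preceding block; asynchronous actions of agents in $J$ and synchronous actions are copied verbatim. Synchronous actions remain enabled in $\mathcal{T}(k)$ because their enabledness in $\mathcal{T}(n)$ required the template protocol to admit them at each of the first $k$ agents' local states, and those protocols coincide in the two models.

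The main obstacle is the reverse direction $\mathcal{T}(k) \leq_{Jss} \mathcal{T}(n)$, since each synchronous action now has to be enabled for \emph{all} $n$ agents and not only the first $k$. Given a path of $\mathcal{T}(k)$, the plan is to construct the simulating path of $\mathcal{T}(n)$ by letting agents $1,\ldots,k$ mirror their original behaviour, and by letting each extra agent $j \in \{k+1,\ldots,n\}$ replay, as a block of stutter steps inserted just before each synchronous action, the sequence of asynchronous actions that agent $1$ performed since the previous synchronous action. Because the template evolution function is deterministic and every agent starts in $\iota$, an induction on the number of synchronous actions performed so far shows that each extra agent sits at the same template state as agent $1$ whenever a synchronous action fires, so that action is indeed enabled for it. These inserted steps are invisible to $V|_J$ and can be absorbed into the block preceding the synchronous action, yielding the required path partition.

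Finally, the epistemic clause (2a) is discharged in each direction by projecting (resp.\ extending) the sibling state $g^1$ with $g_i = g^1_i$: in the projection direction, the projection $g'^1$ of $g^1$ is a reachable state of $\mathcal{T}(k)$ by applying the first construction to a path reaching $g^1$; in the extension direction, a reachable $g'^1$ of $\mathcal{T}(n)$ extending $g^1$ is obtained by applying the second construction. Combining the two simulations with the preservation theorem yields the lemma.
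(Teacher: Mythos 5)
Your overall strategy---establishing two $J$-stuttering simulations and invoking the preservation theorem, with projection onto $[k]$ for $\mathcal{T}(n) \leq_{[k]ss} \mathcal{T}(k)$ and with the extra agents mimicking agent~$1$ for the converse---is the same as the paper's, and your treatment of the forward direction and of the reachability of the witnesses for clause~2(a) is fine. The gap is in the choice of relation for the direction $\mathcal{T}(k) \leq_{[k]ss} \mathcal{T}(n)$. You take $\sim_{Jss}$ to be bare projection equality, but Definition~\ref{def:j-stuttering} requires clause~2(b) to hold at \emph{every} related pair, and projection equality relates states of $\mathcal{T}(k)$ to reachable states of $\mathcal{T}(n)$ whose extra agents are ``stranded''. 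Consider a template with asynchronous actions $a:\iota\to l_1$ and $b:\iota\to l_2$, a synchronous action $s$ enabled only at $l_1$, and no non-silent action enabled at $l_2$. With $k=1$ and $n=2$, the state $(l_1,l_2)$ of $\mathcal{T}(2)$ is reachable and projects onto $(l_1)$, yet no path from $(l_1,l_2)$ can ever fire $s$, whereas $(l_1)$ has a path in $\mathcal{T}(1)$ that does; clause~2(b) fails at this pair, so your relation is not a $[k]$-stuttering simulation and the preservation theorem cannot be invoked. Your induction (``every agent starts in $\iota$, so each extra agent sits at agent~$1$'s template state when a synchronous action fires'') only certifies the pairs generated by your construction from the initial state, not all the pairs you placed in the relation; and the pairs reached via clause~2(a), which quantifies over all states sharing a local component, are exactly where the uncontrolled ones bite.

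The repair is to build the catch-up invariant into the relation itself, which is what the paper does: in addition to $g = g'_{[k]}$ it requires that every extra agent of $g'$ is either at agent~$1$'s template state or one protocol-compliant asynchronous action away from it, and it has the extra agents copy each asynchronous action of agent~$1$ immediately after agent~$1$ performs it, rather than deferring the whole replay to just before the next synchronous action as you do. Your deferred-replay construction can also be made to work, but then the relation must record the weaker invariant that each extra agent can reach agent~$1$'s template state by some protocol-compliant sequence of asynchronous template actions, and clause~2(a) must be re-verified for that finer relation: the witness $g'^{1}$ obtained by lifting a path to $g^{1}$ must be shown to satisfy the invariant, not merely to project onto $g^{1}$. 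As written, the proposal does not close this loop.
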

\begin{proof}	

    Choose an arbitrary \(n \geq k\). Let \([n] = \{1,\cdots,n\}\) and
    \([k+1,n] = [n] \setminus [k]\).  We show that \(\mathcal{T}(n)
    \leq_{[k]ss} \mathcal{T}(k)\) and \(\mathcal{T}(k) \leq_{[k]ss}
    \mathcal{T}(n)\). The lemma then follows.

    (\(\Rightarrow (\mathcal{T}(n) \leq_{[k]ss} \mathcal{T}(k))\))
    Define a relation \(\sim_{[k]ss} = 
    \{ (g,g') \in G^{n} \times G^{k} \; | \;  g_{[k]} =
    g' \}
    \).
    We show that \(\sim_{[k]ss}\) is a \([k]\)-stuttering simulation
    between \(\mathcal{T}(n)\) and \(\mathcal{T}(k)\). Let
    \(g\sim_{[k]ss}g'\). Suppose that \(g_i = g^1_i\) for
    some \(i \in [k]\) and let \(g'^1 = g^1_{[k]}\). We have that
    \(g'_i = g'^1_i\) and \(g^{1} \sim_{[k]ss}g'^{1}\). Now
    let \(\pi=g^{1}a^{1}g^{2}a^{2}g^{3} \cdots\) be a
    path of \(\mathcal{T}(n)\) originating from \(g^1=g\).  We construct a
    path \(\rho\) of \(\mathcal{T}(k)\) originating from \(g'\)  as
    required by \([k]\)-stuttering simulation. Let \(\rho =
    g^{1}_{[k]}a'^{1}g^{2}_{[k]}a'^{2}g^{3}_{[k]} \cdots\), where
    \(a'^{j}=a^{j}\) if \(a^{j} \in \bigcup_{i \in [k]} Act_{i}\) and
    \(a'^{j}=\epsilon\) otherwise,  be the sequence obtained by the
    projection of \(\pi\) onto \([k]\). By assumption on the joint
    silent action, \(\rho\) is a valid path of \(\mathcal{T}(k)\). We
    define a partition  \(B_{1},B_{2},\cdots\) of \(\pi\) and  a
    partition \(B'_{1},B'_{2},\cdots\) of \(\rho\) such that
    \(|B_{j}|=|B'_{j}|=1\) for each \(j \geq 1\). It follows that
    \(B_{j} \sim_{[k]ss} B'_{j}\) for each \(j \geq 1\). Therefore,
    \(\mathcal{T}(n) \leq_{[k]ss} \mathcal{T}(k)\).
    
    (\(\Leftarrow (\mathcal{T}(k) \leq_{[k]ss} \mathcal{T}(n)))\) 
The idea is to allow every agent \(i \in [k+1,n]\) in \(\mathcal{T}(n)\)
to mimic agent 1 (in \(\mathcal{T}(n)\)).  For this purpose, define a
relation \(\sim_{[k]ss}\) by
    \begin{eqnarray*}
        \left\{ (g,g') \in G^{k} \times G^{n} \; | \; g =
        g'_{[k]}  \wedge \exists a^*
        \in Act^{A} :  \forall i \in [k+1,n]  :   \right. \\ \left.
        \left( a_i^* \in P_i(g'_i) \wedge
        tl(t_i(g'_i,a^*_i))=tl(g'_1) \right) \vee tl(g'_i) =
        tl(g'_1)
    \right\}
	\end{eqnarray*}

%
    If \(g \sim_{[k]ss}g'\), then each agent \(i \in [k+1,n]\) in \(g'\)
    is either at the same local state with agent \(1\) in \(g'\) or
    agent \(i\) is able to change its state to the state of agent
    \(1\) by performing the asynchronous action \(a^*_i\). We show
    that \(\sim_{[k]ss}\) is a \([k]\)-stuttering simulation between
    \(\mathcal{T}(k)\) and \(\mathcal{T}(n)\). Let
    \(g\sim_{[k]ss}g'\). Simulation requirement 2(a) follows by  a
    similar argument used in the left to right direction of the lemma.
    For simulation requirement 2(b), note that since the global
    evolution function is deterministic, a path \(g^{1}a^{1}g^{2}a^{2}
    \cdots\) is uniquely defined by the sequence
    \(g^{1}a^{1}a^{2}\cdots\).  We inductively define a function \(f\)
    which maps a path \(\rho=g^1a^1g^2a^2g^3\cdots\), \(g^1=g\), in
    \(\mathcal{T}(k)\) into a path in \(\mathcal{T}(n)\). 

    \begin{itemize}      
        \item \(f(g^1a^1g^2a^2g^3\cdots) = g'a^*_{j_1}\cdots
            a^*_{j_d}f(a^1g^2a^2g^3)\), where \(\{j_1\cdots j_d\} =
            \{i \in [k+1,n] \; | \; tl(g'_i) \neq tl(g'_1)\}\); 
        
        \item \(f(a^1g^2a^2g^3\cdots) = a^1f(a^2g^3\cdots)\), if
            \(a^1 \notin Act_1^A\);

        \item  \(f(a^1g^2a^2g^3\cdots) = a^1tl(a^1)_{k+1}\cdots
            tl(a^1)_nf(a^2g^3\cdots)\), if \(a^1 \in Act_1^A\);
   \end{itemize}
                                                                                          
    We partition \(\pi\) into singleton blocks \(B_1,B_2,\cdots\) and
    we partition \(f(\pi)=g^1a^1\cdots\) into the sequence
    \(B'_1,B'_2,\cdots\), where \(B'_i = g^j\), if \(a^{j-1} \in
    \bigcup_{z \in [2,k]} Act_z\), and \(B'_i = g^j\cdots g^{j+d}\),
    if \(a^{j-1},\cdots,a^{j+d-1} \in \bigcup_{z \in \{1\} \cup [k+1,n]}
    Act_z\) and \(a^{j+d} \in \bigcup_{z \in [k]}\) \(Act_z\). It follows
    that \(B_j \sim_{[k]ss} B'_j\), therefore, \(\mathcal{T}(k)
    \leq_{[k]ss} \mathcal{T}(n)\).	
\end{proof}

A consequence of the above lemma is the following:

\begin{theorem} \label{theorem:cut-off}
Let \(\psi\) be an  IACTL\(^*\)K\(_{-X}\) formula of the form
\(\bigwedge_{J} \varphi(J)\). Then, \( \forall n \geq |J| :
\mathcal{T}(n) \models \psi \text{ iff } \mathcal{T}(|J|) \models \psi
\).
\end{theorem}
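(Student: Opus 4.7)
The plan is to obtain the theorem as an immediate corollary of the two lemmas already proved in the paper, namely Lemma~\ref{lemma:symmetry-reduction} (symmetry reduction) and Lemma~\ref{lemma:actl*k-se} (invariance under the number of agents). Setting $k = |J|$, the idea is to use symmetry reduction to strip the $\bigwedge_{J}$ connective on both sides and then use the invariance result to move between $\mathcal{T}(n)$ and $\mathcal{T}(k)$.

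Concretely, I would proceed in four steps. First, fix an arbitrary $n \geq |J|$. Second, by Lemma~\ref{lemma:symmetry-reduction} applied to $\mathcal{T}(n)$, we have
\[
\mathcal{T}(n) \models \textstyle\bigwedge_{J}\varphi(J) \iff \mathcal{T}(n) \models \varphi(\{1,\ldots,|J|\}).
\]
Third, since $\varphi(\{1,\ldots,|J|\})$ is an ACTL$^*$K$_{-X}^{\{1,\ldots,|J|\}}$ formula and $n \geq |J|$, Lemma~\ref{lemma:actl*k-se} yields
\[
\mathcal{T}(n) \models \varphi(\{1,\ldots,|J|\}) \iff \mathcal{T}(|J|) \models \varphi(\{1,\ldots,|J|\}).
\]
Fourth, applying Lemma~\ref{lemma:symmetry-reduction} once more, this time to $\mathcal{T}(|J|)$, we obtain
\[
\mathcal{T}(|J|) \models \varphi(\{1,\ldots,|J|\}) \iff \mathcal{T}(|J|) \models \textstyle\bigwedge_{J}\varphi(J).
\]
Chaining these three equivalences gives $\mathcal{T}(n) \models \psi \iff \mathcal{T}(|J|) \models \psi$ for every $n \geq |J|$, which is exactly the required statement.

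There is no real obstacle here: all the substantive work has already been done in the preceding two lemmas. The only point worth checking is that Lemma~\ref{lemma:actl*k-se} is applicable in the middle step, which it is because $\varphi(\{1,\ldots,|J|\})$ has all its indexed propositions and epistemic modalities referring to agents in $\{1,\ldots,|J|\}$ by the restriction (S3) in Definition~\ref{def:iactl*-syntax} together with the definition of $\varphi(J)$. Hence $|J|$ is the smallest cutoff, matching the claim preceding the theorem that the smallest cutoff for $\bigwedge_J \varphi(J)$ is exactly $|J|$.
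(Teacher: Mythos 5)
Your proposal is correct and follows exactly the paper's own argument: the paper's proof is a one-line invocation of Lemma~\ref{lemma:symmetry-reduction} to reduce $\bigwedge_J \varphi(J)$ to $\varphi(\{1,\ldots,|J|\})$ followed by Lemma~\ref{lemma:actl*k-se}, which is precisely the chain of equivalences you spell out. Your version is merely more explicit in noting that the symmetry lemma must be applied to both $\mathcal{T}(n)$ and $\mathcal{T}(|J|)$, which is a worthwhile clarification but not a different route.
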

\begin{proof}
By exploiting symmetry (Lemma~\ref{lemma:symmetry-reduction}), it
suffices to prove the result for \(\varphi([|J|])\)
(Lemma~\ref{lemma:actl*k-se}).
\end{proof}

The above theorem is our main theoretical result. 
It follows that to verify a formula \(\bigwedge_{J} \varphi(J)\) on
all system instances, it suffices to verify the formula
\(\varphi([J])\) for the system instance \(\mathcal{T}(|J|)\). Since
in the MAS literature most properties are expressed by using one or
two epistemic and propositional  indices, this dramatically improves
our verification abilities. Furthermore we can combine this technique
with others available in the literature. Specifically, upon obtaining the
instance \(\mathcal{T}(|J|)\) we can further apply partial order
reductions~\cite{lomuscio+10a}, abstraction~\cite{cohen2009abstraction}, data
symmetry reduction~\cite{cohen2009symmetry}, etc., to
further reduce the state space of the model. 


\begin{corollary}
Model checking parameterised interleaved interpreted systems against
IACTL\(^*\)K formulae of the form \(\bigwedge_J \varphi(J)\) is decidable.
\end{corollary}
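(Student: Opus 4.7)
The plan is to obtain decidability as an almost immediate consequence of Theorem~\ref{theorem:cut-off} together with the classical decidability of temporal-epistemic model checking on finite Kripke structures. The parameterised verification problem at stake is to decide, given a template agent $\mathcal{T}$ and a formula $\psi = \bigwedge_J \varphi(J)$ of IACTL$^*$K$_{-X}$, whether $\mathcal{T}(n) \models \psi$ holds for every $n \geq |J|$.

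First I would apply Theorem~\ref{theorem:cut-off} to collapse the universal quantification over the infinite family $\{\mathcal{T}(n)\}_{n \geq |J|}$ into the single query $\mathcal{T}(|J|) \models \psi$. Next I would invoke Lemma~\ref{lemma:symmetry-reduction} to rewrite this as the ACTL$^*$K$_{-X}$ query $\mathcal{T}(|J|) \models \varphi(\{1,\ldots,|J|\})$. Since the template has a finite set $L$ of local states and a finite action set, the instance $\mathcal{T}(|J|)$ is effectively constructible via Definition~\ref{def:parameterised-system}: it has at most $|L|^{|J|}$ global states, a computable initial state, a finitely representable transition relation, and a computable labelling.

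Finally I would appeal to the classical fact that CTL$^*$ model checking on a finite pointed Kripke structure is decidable, together with the observation that each epistemic modality $K_i$ ($i \in \{1,\ldots,|J|\}$) is interpreted over the equality of local components, an equivalence relation on the finite global state space which can be computed explicitly. The $K_i$ subformulae can therefore be handled by a standard labelling pass, and combined with the CTL$^*$ algorithm this yields an effective procedure.

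The only point that merits attention, rather than a genuine obstacle, is the effectiveness of the cutoff reduction itself: $|J|$ is read off syntactically from $\psi$, and the construction of $\mathcal{T}(|J|)$ is effective, so the whole pipeline (read $|J|$ from $\psi$, build $\mathcal{T}(|J|)$, symmetry-reduce $\psi$ to $\varphi(\{1,\ldots,|J|\})$, run a finite-state ACTL$^*$K$_{-X}$ model checker) is an algorithm. Decidability follows.
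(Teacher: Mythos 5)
Your proposal is correct and follows essentially the same route as the paper, which likewise derives the corollary directly from Theorem~\ref{theorem:cut-off} by reducing to model checking the single finite instance \(\mathcal{T}(|J|)\) against \(\varphi([|J|])\). You merely spell out the effectiveness of the construction and the finite-state decidability of ACTL\(^*\)K\(_{-X}\) in more detail than the paper's one-line argument.
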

\begin{proof}
    By Theorem~\ref{theorem:cut-off}, it suffices to model check the
    system instance of \(|J|\) agents against \(\varphi([|J|])\).
\end{proof}

The above is in line with literature in reactive
systems~\cite{clarke2006environment,emerson+00,emerson+95,emerson2003model}
where, although verification of parameterised systems is known to be
undecidable in general~\cite{krzysztof+86}, decidable fragments have
been obtained by imposing restrictions on the systems and the
properties studied.

\section{Example: Autonomous Robots}


For illustration purposes we exemplify the theory presented above on
a parameterised variant of the autonomous robot example. 
We assume an arbitrary number of robots each running along its own
track and each equipped with its own faulty sensor. The environment
may move all non-halted robots one position forward at each time step.
We represent this scenario by means of PIIS\@.  
We arbitrarily choose eight distinct locations; note that the number
of locations does not affect the scenario as long as it is greater
than four.  Since we use interleaving semantics, we assume that the
environment moves each robot in sequence; however, we insist on the
environment to move all robots before moving a robot twice.

We proceed to define the template agent \(\mathcal{T}\). A template
state is a 4-tuple \(l = (p,s,h,m)\), where \(p\) and \(s\) represent
the position of the robot and the value of its sensor respectively,
\(h\) represents whether or not the robot has halted, and \(m\) is a
binary variable representing whether or not the environment has moved
the robot in an interleaving sequence (a sequence in which the
environment moves all non-halted robots from position \(q\) to
\(q+1\)).  Therefore,  \( L = \left\{ (p,s,h,m) \; | \; 0 \leq p,s
    \leq 7 \text{ and } h,m \in \{\top,\bot\} \right\}\) is the set of
    template states from which we define \(\iota = (0,0,\bot,\bot)\)
    as the initial template state. 
    
A robot can either do nothing or halt; the set \(Act^{A}\) of
asynchronous template actions is \(Act^{A} = \{null^=,null^{+},\)
\(null^{-},halt\}\); the \(null\) actions represent the environment
moving the robot a position forward and either providing a correct
sensor reading (\(null^=\)) or not (\(null^{+},null^{-}\)). A robot can
move to position \(q+1\) only if all non-halted robots have moved to
position \(q\); the unique synchronous template action \( n\_s
\) (next step) synchronises all robots before the environment can move a robot.
As it will be clear below, when a \(null\) action is performed at position
\(q\), then \(m\) is set to \(\top\) and the protocol selects the
action \(n\_s\) thereby disallowing a robot to move at position
\(q+1\) before all robots have moved to position \(q\).
    
The template protocol \(P\) selects one of the \(null\) actions at
position \(q\) when \(m=\bot\) and the sensor reading is less than 3.
The synchronous action \(n\_s\) is the only allowed action when
\(m=\top\). Whenever the sensor reading is greater than 2, the halting
condition is satisfied; therefore, the protocol selects the \(halt\)
action: \(P( (p<7,s<3,h=\bot,m=\bot) ) =
\{null^=,null^{+},null^{-}\}\); \( P( (p=*,s=*,h=*,m=\top) ) =
\{n\_s\}\); \( P( (p=*,s \geq 3, h=\bot, m=\bot) ) = \{halt\}\),
where \(*\) expresses any value. The template evolution function
contains the following transitions: 
\((p,s,\bot,\bot) \stackrel{null^=}{\rightarrow} (p+1,p+1,\bot,\top)\); 
\((p,s,\bot,\bot) \stackrel{null^{+}}{\rightarrow}
(p+1,p+2,\bot,\top)\); 
\( (p,s,\bot,\bot) \stackrel{null^{-}}{\rightarrow}
(p+1,p,\bot,\top)\); 
\((p,s,\bot,\top) \stackrel{n\_s}{\rightarrow} (p,s,\bot,\bot)\) and
\((p,s,\bot,\bot) \stackrel{halt}{\rightarrow} (p,s,\top,\top)\).


We introduce the following atomic propositions: \(\textit{AP} =
\{h \text{(halted)},g \text{(goal region)},r \text{(restricted region)}\} \).  The interpretation of these propositions is
given by the following valuation function: \(V(h) = \{l \in L \;
| \; l_3=\top\}\), \(V(g) = \{l \in L \; | \; 2 \leq l_1 \leq
4\}\), and \(V(r) = \{l \in L \; | \; 0 \leq l_1 \leq 1\}\).

We verify that the halting condition is sound and complete in the
parameterised variant  by verifying  the
formulae \(\varphi_{AR2} = \bigwedge_{\{i\}} AG( g_i \rightarrow
AG(g_i))\) and \(\varphi_{AR3} = \bigwedge_{\{i\}} AG ( r_i
\rightarrow \neg h_i )\). The  specification \(\varphi_{AR2}\)
expresses that ``for every robot \(i\), if \(i\) is within the goal region,
then \(i\) never exits the goal region''.  The formula \(\varphi_{AR3}\)
states that ``for every robot \(i\), if \(i\) is within the restricted
region, then \(i\) has not halted''.  Note that the combined state
space for the systems to be checked is unbounded. Observe also that
model checking the above specifications is equivalent to model
checking the formulae \(AG(g_1 \rightarrow AG(g_1)) \wedge \cdots
\wedge\) \(AG(g_n \rightarrow AG(g_n))\) and \(AG( r_1
\rightarrow \neg h_1 ) \wedge \cdots \wedge AG(r_n \rightarrow \neg
h_n),\) on each system instance \(\mathcal{T}(n)\), \(n \geq 1\).
This is clearly not possible to check via standard model checking
techniques. However, by using Lemmas~\ref{lemma:symmetry-reduction}
and~\ref{lemma:actl*k-se} we can deduce that the MAS cutoff is equal
to~1 and reduce the problem to checking the formulas \(AG(g_1
\rightarrow AG(g_1))\) and \(AG(r_1 \rightarrow \neg h_1)\) on
the system instance \(\mathcal{T}(1)\). This is a simple problem: 
we can easily check the specifications are verified, thereby
deducing that the parameterised queries are also satisfied.


To proceed in our analysis further, we can also verify \( \varphi_{AR1} \). Also we
could check that a robot knows that every other robot halted at the
same time: \( \varphi_{AR4} = \bigwedge_{\{i,j\}} AG( h_{i}
\rightarrow K_{i}h_{j} ) \). Similarly to what above, the formulae
\(\varphi_{AR1}\) and \(\varphi_{AR4}\) can be reduced through
Lemma~\ref{lemma:symmetry-reduction} to \(\varphi'_{AR1} = AG(h_{1}
\rightarrow K_{1}g_{2})\) and \(\varphi'_{AR4} = AG(h_{1} \rightarrow
K_{1}h_{2})\), which can be verified on the system instance
\(\mathcal{T}(2)\) obtained by using a cutoff equal to~2 through
Lemma~\ref{lemma:actl*k-se}.  Also in this case we can check the
result on the much smaller model and verify that the formula
\(\varphi'_{AR1}\) holds while \(\varphi'_{AR4}\) does not~(since the
sensor readings may differ). So we infer that  \(\varphi_{AR1}\) holds
on the unbounded system while \(\varphi_{AR4}\) does not.


\section{EVALUATION}

\paragraph{Implementation}
We have implemented the presented methodology as an extension to the
open-source model checker
\texttt{\textsc{mcmas}}~\cite{lomuscio2009mcmas}. The extended model
checker, also open-sourced and named \texttt{\textsc{mcmas-p}}, is available
from~\cite{mcmasp}. ISPL, the input language of
\texttt{\textsc{mcmas}}, was suitably extended to allow for the
definition of PIIS and to support  the specification of indexed
formulae. The description of a PIIS in this language (called PISPL)
includes the declaration of a template agent. This declaration differs
from agent declarations in ISPL by having sections of asynchronous and
synchronous actions, and an initial state section. 
The specifications supported by \texttt{\textsc{mcmas-p}}
are expressed in indexed ACTLK\(_{-X}\)\@.  

Given a PIIS and a formula to be verified, \texttt{\textsc{mcmas-p}}
determines the cutoff \(c\) for the system as in
Theorem~\ref{theorem:cut-off} by counting the number of unique indices
used in the specification to be tested. A concrete system of \(c\)
agents, each an indexed copy of the template agent, is then
automatically constructed and represented symbolically.  The
specifications are automatically reduced to formulae in
ACTLK\(_{-X}\), as described in
Lemma~\ref{lemma:symmetry-reduction}. The OBDD-based algorithms
utilised by \texttt{\textsc{mcmas}} are then used to verify the system
against the reduced ACTLK\(_{-X}\) formulae.  The
BDD encoding of the joint protocol is different from that of
\texttt{\textsc{mcmas}} to enforce the interleaving semantics used
here.

\paragraph{Experimental Results} 
In order to evaluate the methodology
presented, we 
considered the parameterised autonomous robot scenario against the specification
\(\varphi_{AR1}\). We used a PC  with an Intel Core i7 processor
clocked at 2.20 GHz, with 6144 KiB cache, and running 64-bit
Fedora~17, kernel 3.3.4. The results are reported in
Table~\ref{table:exp-results}. The \emph{Robots} and \emph{States}
columns respectively show the system instance (number of robots) and
its state space; the \emph{Instantiations} column shows the number of
the possible instantiations of \(\varphi_{AR3}\), each to be verified
by the model checker; the \emph{Time} and \emph{Memory} columns show
the CPU time and memory usage respectively.  These results show that,
as expected, the state space and the length of the formulae to be
verified grow exponentially with the number of agents in the system.
As a consequence of this,
verification quickly becomes unfeasible under the time and memory
constraints. This is exemplified for the system of 90~robots, where
\texttt{\textsc{mcmas}} did not finish the model construction within the timeout of one
hour. In addition to this, of course,  plain model checking
cannot ever ensure the property holds on a system of
arbitrary many agents. In comparison \texttt{\textsc{mcmas-p}}
constructed and verified a system of 2 robots in under 0.1 seconds
thereby showing the property holds for an unbounded number of agents.

\begin{center}
    \begin{table*}
    \centering
\begin{tabular}{cccccc}
        \toprule
        \multicolumn{2}{c}{Model}  & Instantiations & Time (s) & Memory (KiB) \\
        \cmidrule(r){1-2}
        Robots & States \\
        \midrule
        2 & 201 & 2 & 0 & 9010 \\
        30 & \num{1.26092e27} & 870 & 18 & 44744 \\
        60 & \num{3.59402e57} & 3540 & 868 & 63894032 \\
        90 & \texttt{\textsc{timeout}} & 8010 & \texttt{\textsc{timeout}} &
        \texttt{\textsc{timeout}} \\ 
        \bottomrule
    \end{tabular}
\caption{\texttt{\textsc{mcmas}} verification results for \(\varphi_{AR1}\).}
\label{table:exp-results}

\vspace{-0.5cm}
\end{table*}
\end{center}

\section{CONCLUSIONS AND RELATED WORK}

In this paper we have developed a technique to verify that a
temporal-epistemic property holds in a MAS irrespective of the
number of agents present in the system. The problem is undecidable in
general but we have defined a suitable semantics  for
which we gave a sound and complete procedure for determining a cutoff
for a system. To do so we have defined a suitable parameterised logic
and developed stuttering-equivalence simulation results for it on
PIIS.  We find the result encouraging as it opens the way for the
verification of a large number of protocols previously verified only
for individual instances containing a limited number of agents. Open
systems with an unbounded number of homogeneous participants, e.g.,
including negotiations and auctions, seem particularly suitable for
this analysis.

\paragraph{Related Work} 
Existing literature on parameterised
verification~\cite{clarke2006environment,pnueli2002liveness,wolper1990verifying,emerson+95,emerson+00,hanna+10}
is limited to reactive systems and plain temporal logics.  Moreover,
mainstream methodologies~\cite{emerson+95,emerson+00,hanna+10} do not
guarantee soundness, completeness and the identification of the
smallest cutoff at the same time.  In~\cite{hanna+10} a cutoff is
identified by enumerating the system instances and finding the
smallest instance able to simulate a ``special'' structure which
includes the behaviour of every instance.  Although the technique is
widely applicable and independent of the communication topology, a
cutoff is not guaranteed to exist. Results closer to those in this
paper are the sound and complete techniques put forward
in~\cite{emerson+95,emerson+00}. Similarly to this contribution,
\cite{emerson+95,emerson+00} present stuttering-simulation results
between the cutoff model and every system instance thereby ensuring
soundness and completeness; however, the results in~\cite{emerson+95}
are applicable to ring topologies only and the technique
in~\cite{emerson+00} does not identify the smallest cutoff.

In addition to cutoffs, abstraction techniques have of course  been used in
parameterised verification. In~\cite{pnueli2002liveness}
concrete states are counter abstracted; an abstract state is a tuple
of counters, one for each local state, denoting the number of system
participants in the state. This process can be automated, but it is
only applicable to a narrow class of systems and it is restricted to
liveness properties. Environmental
abstraction~\cite{clarke2006environment} extends counter abstraction
by counting the number of participants that satisfy a given predicate
and, although achieving wider applicability, the methodology has not,
to our knowledge, 
been automated yet. In~\cite{wolper1990verifying} a \emph{network
  invariant} is identified which exhibits the behaviour of all system
instances; if the invariant satisfies a property, then the property is
satisfied by all system instances. A network invariant, however, is
not guaranteed to exist, and, even when it does, its identification is
not automated. In addition, none of these works tackle
epistemic logic, nor MAS semantics, as we do here.

\paragraph{Future Work}
A current limitation of the PIIS formalism is that agents cannot
evolve differently depending on the environment's action. This limits
the application of the technique to particular systems such different
network topologies. In future work we plan to alleviate this
limitation as well as apply the methodology here presented to
protocols of practical interests.




\section{Acknowledgments}
The research described in this paper was  supported by the EPSRC Research
Project ``Trusted Autonomous  Systems'' (grant No. EP/I00520X/1). 
\addcontentsline{toc}{chapter}{Bibliography}
\bibliographystyle{plain}
\bibliography{bib.bib}

\end{document}